\documentclass[11pt]{amsart}
\usepackage[a4paper]{geometry}
\usepackage{amssymb, algorithm, algorithmic, tikz}
\usepackage[foot]{amsaddr}

\newcommand{\maj}{\mathop{\mathrm{maj}}}
\newcommand{\wt}{\mathop{\mathrm{wt}}}
\newcommand{\cut}{\mathop{\mathrm{cut}}}
\newcommand{\cubicmaxcut}{\textsc{Cubic\-MaxCut}}
\newcommand{\Nset}{\mathbb{N}}

\newcommand{\Ztilde}{\widetilde Z}
\newcommand{\NP}{\mathrm{NP}}
\newcommand{\RP}{\mathrm{RP}}

\newcommand{\bfz}{\mathbf{z}}

\newtheorem{theorem}{Theorem}
\newtheorem{lemma}[theorem]{Lemma}
\newtheorem{corollary}[theorem]{Corollary}

\let\phi=\varphi

\title[The antiferromagnetic Ising model beyond line graphs]{The antiferromagnetic Ising model\\ beyond line graphs}\thanks{Work partly supported by grant UKRI2771: Zeros, Algorithms, and Correlation for Graph Polynomials.}
\author{Mark Jerrum}
\address{School of Mathematical Sciences, Queen Mary, University of London, Mile End Road, London E1~4NS.}
\email{m.jerrum@qmul.ac.uk}

\begin{document}

\maketitle

\begin{abstract}
Both the antiferromagnetic Ising model and the hard-core model could be said to be tractable on line graphs of bounded degree.  For example, Glauber dynamics is rapidly mixing in both cases.  In the case of the hard-core model, we know that tractability extends further, to claw-free graphs and somewhat beyond.  In contrast, it is shown here that the corresponding extensions are not possible in the case of the antiferromagnetic Ising model.
\end{abstract}

\section{Motivation}

This note is a contribution to the study of the computational complexity of spin systems on restricted graph classes.  Spin systems as considered here are based on an underlying simple graph~$G$, and have as their `configurations' assignments of `spins' to the vertices $V(G)$ or edges $E(G)$ of~$G$.  In this introduction, concepts will be treated informally, but more precise definitions may be found on Section~\ref{sec:prelim}.  Computational questions about spin systems are invariably hard to answer in general, but may become easier when $G$ is restricted.  Examples include the classical exact solution of the Ising model on planar graphs, or the approximate computation of the partition function of the dimer model on bipartite graphs.  Restricting the graph $G$ also leads to fascinating open problems such as the complexity of approximating the partition function of the hard-core (independent set) model on bipartite graphs.

This note is about the antiferromagnetic Ising model.  However, for motivation, we need to start with another antiferromagnetic model, the hard-core model.  The configurations here are independent sets in~$G$, and the density of the configurations is controlled by a `fugacity' $\lambda>0$. See equation~\eqref{eq:hardcoredef}.

Generally, the hard-core model becomes intractable at high $\lambda$, when vertices in the independent set pack more tightly.   But restricted to so-called line graphs, the hard-core model becomes equivalent to the monomer-dimer (matchings) model.  Tractability of the monomer-dimer model expresses itself in various ways: Glauber dynamics mixes in polynomial time for all~$\lambda$ (Jerrum \& Sinclair \cite{JerrumSinclair}) and even in optimal $O(n\log n)$ time when $G$ has bounded degree (Chen, Liu and Vigoda~\cite{ChenLiuVigodaOptimal});  complex zeros of the partition function lie on the (negative) real axis (Heilmann and Lieb~\cite{HeilmannLieb});  and  there is a deterministic polynomial-time approximation algorithm for the partition function (Bayati et al.~\cite{BayatiEtAl}).  The last of these requires that~$G$ has bounded degree, and we shall \emph{assume throughout the rest of this note that this is the case};  more precisely, we consider graph families whose members have degree bounded by a universal constant~$\Delta$.   

In the case of the hard-core model we can range further in our search for tractability, specifically to claw-free graphs, a graph class that strictly includes line graphs.   Matthews showed that Glauber dynamics mixes in polynomial time~\cite{Matthews}, and Chen and Gu \cite{ChenGu} strengthened this to optimal $O(n\log n)$ time; Chudnovsky and Seymour~\cite{ChudnovskySeymour} showed that complex  zeros continue to lie on the real axis.  Patel and Regts~\cite{PatelRegts} presented a polynomial-time deterministic algorithm for the partition function, building on work of Barvinok~\cite{Barvinok}.  And then we can drive the search onwards to more exotic hereditary graph classes obtained by excluding subdivisions of claws, with tractability results from Bencs~\cite{BencsPhD}, Jerrum~\cite{JerrumHfree} and Jerrum and Patel~\cite{JerrumPatel}.

Turning attention finally to the antiferromagnetic Ising model, we find that line graphs are again tractable, with Glauber dynamics shown to mix in polynomial time by Dyer, Heinrich, Jerrum and M\"uller \cite{DyerHeinrichJerrumMuller}, a result that was improved to optimal $O(n\log n)$ time by Chen, Liu and Vigoda \cite{ChenLiuVigodaHolant}.  (The interaction strength may be arbitrarily large, but makes itself felt in the constant implicit in the O-notation.) Bencs, Csikv\'ari and Regts~\cite{BencsCsikvariRegts} show that zeros of the (univariate) partition function lie on the real axis, and deduce the existence of a polynomial-time deterministic algorithm for approximating the partition function. Given the similarity of the two models --- both having two spins and being antiferromagnetic --- it is natural to ask if we can take tractability of the the Ising model beyond line graphs. 

In the case of the hard-core model, the first step was to claw-free graphs.  The ground (minimum energy) configurations of the antiferrognetic Ising model correspond to maximum cuts in~$G$.  Bodlaender and Janson \cite{BodlaenderJanson} showed that finding a maximum cut is NP-complete in claw-free graphs, strongly suggesting that the antiferromagnetic Ising model is intractable on this graph class.\footnote{A further hint is provided by Bencs, Csikv\'ari and Regts~\cite[\S5]{BencsCsikvariRegts} who present claw-free graphs whose Ising partition function is not real-rooted.}    In this note we dot the i's and cross the t's to show that this is indeed the case. The methods employed are standard, dating back at least to Luby and Vigoda~\cite{LubyVigoda}.  Theorem~\ref{thm:NPhard} asserts that it is NP-hard to approximate the partition function, and that being able to sample configurations efficiently would imply $\RP=\NP$.  Theorem~\ref{thm:torpid} asserts that Glauber dynamics requires exponential time to mix in general, this being unconditional and independent of complexity-theoretic assumptions.

Although claw-free graphs were the initial goal of this note, it turns out that the constructions employed work for the smaller class of quasi-line graphs, and the even smaller class of line-graphs of multigraphs.  The latter is very close to line graphs, and suggests that line-graphs are the limit of tractability for the antiferromagnetic Ising model with strong interactions.  Further circumstantial evidence is provided by a survey of hereditary graph classes lying just above line~\cite{Graphclasses}, which suggests that maximum cardinality cut is NP-complete beyond line graphs.  However, in contrast to the hard-core model, where induced subdivided claws completely explain tractability (at least for hereditary classes excluding a finite set of induced subgraphs), a precise boundary between tractability and intractability is not established here.

\section{Preliminaries}
\label{sec:prelim}

Suppose $G$ is a graph, $\mu>1$ and $\lambda>0$.  Let $n=|V(G)|$ and $m=|E(G)|$.  We are concerned with two spin systems on~$G$.  In the first of these we shall take the spins to be $+1$ and $-1$, but write $+$ and $-$ for compactness.

For $\sigma:V(G)\to \{+,-\}$ the \emph{cut} defined by $\sigma$ is 
$$\cut\sigma=\big\{e\in E(G):\sigma(e)=\{+,-\}\big\}$$ 
and $\wt\sigma=\mu^{|\cut \sigma|}$ is its \emph{weight}.  Then define the (antiferromagnetic) \emph{Ising partition function} by 
$$Z(G,\mu)=\sum_{\sigma:V(G)\to\{+,-\}}\!\!\wt\sigma.$$
The partition function is the normalising factor for the corresponding \emph{Gibbs distribution}, which is given by $\sigma\mapsto\wt(\sigma)/Z(G,\mu)$ for all $\sigma:V(G)\to \{+,-\}$.

The results in this note will concern the partition function as $Z(G,\mu)$ as defined above; but to make contact with existing work, we need to consider a generalisation of $Z(G,\mu)$ that includes a contribution from individual vertices as well as edges.  To this end, introduce vertex weights $\bfz=\{z_v:v\in V(G)\}$ and define  
$$Z(G,\mu,\bfz)=\sum_{\sigma:V(G)\to\{+,-\}}\!\!\wt\sigma\prod_{v:\sigma(v)={+}}\!\!z_v.$$
The variable $z_v$ may be viewed as representing the effect of an `external field' acting on vertex~$v$.  The complex zeros of $Z(G,\mu,\bfz)$, regarded as a polynomial in~$\bfz$, have been extensively studied, and give insight into the phenomenon of phase transition in the Ising model.  There is also interest in the univariate polynomial obtained by substituting a single variable~$z$ for each of the~$z_v$, which corresponds physically to a uniform field.  Roughly speaking, the absence of zeros in the neighbourhood of the positive real axis is an indicator of the absence of a phase transition.

An \emph{independent set} in $G$ is a set of vertices $U\subseteq V(G)$ such that no edge in $G$ has both vertices in~$U$.  Denote by $\mathcal I(G)$ the collection of all independent sets in~$G$.  Then the independence polynomial, or \emph{hard-core partition function}, is defined by 
\begin{equation}\label{eq:hardcoredef}
Z_\mathrm{IS}(G,\lambda)= \sum_{U\in\mathcal{I}(G)}\lambda^{|U|}.
\end{equation}
As before, the partition function is the normalising factor for the Gibbs distribution on independent sets, which is given by $U\mapsto \lambda^{|U|}/Z_\mathrm{IS}(G,\lambda)$.
Again, there is value in studying the complex zeros of $Z_\mathrm{IS}(G,\lambda)$ regarded as a polynomial in~$\lambda$.  There is an obvious multivariate generalisation.  

For both of the above spin systems (and others) the main computational concerns are with approximating the partition function within (say) a constant factor, and with the mixing time (time to near equilibrium) of a simple Markov chain with single site updates knows as \emph{Glauber dynamics}.  Specialised to the Ising model, this is defined as in Algorithm~\ref{alg:Glauber}.

\begin{algorithm}
\begin{algorithmic}
\STATE {Suppose $X_t=\sigma$}
\COMMENT {Configuration at time $t$}
\STATE {Sample $v\in V(G)$ uniformly at random}
\STATE {$\sigma'(u)\leftarrow\sigma(u)$ for all $u\in V(G)\setminus\{v\}$}
\STATE {$\sigma'(v)\leftarrow-\sigma(v)$}
\STATE {With probability $\wt(\sigma')/(\wt(\sigma)+\wt(\sigma'))$ let $X_{t+1}\leftarrow\sigma'$}
\STATE {With probability $\wt(\sigma)/(\wt(\sigma)+\wt(\sigma'))$ let $X_{t+1}\leftarrow\sigma$}
\end{algorithmic}
\caption {Glauber dynamics for the Ising model on a graph $G$.}\label{alg:Glauber}
\end{algorithm}

The mixing time of an ergodic Markov chain $(X_n:n\in\Nset)$ on state space~$\Omega$ is the maximum over starting states $\omega\in\Omega$, of the minimum time $t\in\Nset$ at which the law of~$X_t$ conditioned on $X_0=\omega$ is within variation distance $\frac14$ of the limiting distribution.

The graph classes that are most relevant in this note are line, quasi-line and claw-free.  Suppose $H$ is a graph.  The line graph of $H$ is the graph $G=L(H)$ with vertex set $V(G)=E(H)$ and edge set
\begin{equation}\label{eq:line}
E(G)=\big\{\{e,f\}:e,f\in E(H), e\not=f\text{ and }|e\cap f|\not=\emptyset\big\}.
\end{equation}
A graph is \emph{line} if it is the line graph $L(H)$ of some base graph~$H$.  A graph is \emph{claw-free} if it does not contain a claw, i.e., the graph $K_{1,3}$, as an induced subgraph.  Finally a graph is \emph{quasi-line} if, for every vertex $v$, the neighbours of $v$ may be partitioned into two sets, both of which induce a clique.  (There may be additional edges in the neighbourhood of~$v$.)  It is clear that, in a line graph, the neighbourhood of every vertex can be so partitioned.  It is equally clear that no quasi-line graph contains a claw.  We therefore have 
$$
\{\text{line graphs}\}\subset \{\text{quasi-line graphs}\}\subset \{\text{claw-free graphs}\}.
$$
The inclusions are both strict.  All three graph classes are hereditary, i.e., closed under taking induced subgraphs. Both line graphs and claw-free graphs have been extensively studied, while quasi-line graphs have received less, but still significant, attention~\cite{ChudnovskySeymourQuasi}.  In the context of spin systems, notable line graphs include the kagome and pyrochlore lattices.

Finally, although the main theorems are stated for quasi-line graphs, they in fact apply to the even more restricted (and apparently more obscure) class of \emph{line graphs of multigraphs}.  Suppose $H$ is a multigraph, by which we understand a graph in which parallel edges, but not loops, are allowed.  Its line graph $G=L(H)$ has vertex set $V(G)=E(H)$ and edge set $E(G)$ exactly as defined in~\eqref{eq:line}.  Note that $G$ is a simple graph:  parallel edges $e,f$ in $H$ generate a single edge in $G$, even though they share two endpoints.  It is immediate that a line graph is, in particular, a line graph of a multigraph, and it is easy to check that a line graph of a multigraph is a quasi-line graph.  It may be helpful to think of a line graph as a graph that is formed as a (non-disjoint) union of cliques in which (i)~each vertex is in at most two cliques, and (ii)~each pair of cliques intersect in at most one vertex.  Then a line graph of a multigraph may be viewed as simply dropping condition~(ii). All inclusions mentioned are strict, so the complete picture is 
$$
\{\text{line graphs}\}\subset \{\text{line graphs of multigraphs}\}\subset\{\text{quasi-line graphs}\}\subset \{\text{claw-free graphs}\}.
$$

\section{Complexity-theoretic analysis}

A graph is \emph{cubic} if all its vertices have degree~3.  $\cubicmaxcut$ is the following problem.
\begin{description}
\item[Instance] A cubic graph $G$; 
\item[Output] $\max_{\sigma:V(G)\to\{+,-\}}|\cut\sigma|$.  
\end{description}
It is NP-hard.  More than that, Berman and Karpinski~\cite[Thm 1(1)]{BermanKarpinski} show:

\begin{lemma}\label{lem:BK}
It is $\NP$-hard to approximate $\cubicmaxcut$ within ratio $0.997$. That is to say, it is $\NP$-hard to compute a number $c$ such that $0.997C\leq c\leq C$, where $C=\cubicmaxcut(G)$. 
\end{lemma}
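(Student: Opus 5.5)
The lemma is quoted directly from Berman and Karpinski~\cite[Thm 1(1)]{BermanKarpinski}. In the paper I would simply cite it. If I had to reconstruct it, the plan is a gap-preserving reduction chain that starts from a PCP-based inapproximability result and ends at cubic graphs. The constant $0.997$ is not intrinsic: it falls out of the gadget bookkeeping, so I would not try to optimise it.

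The starting point is H\aa stad's theorem: for every $\epsilon>0$ it is $\NP$-hard to distinguish instances of Max-E3-LIN-2 (linear equations mod~2 with three variables each) in which a $1-\epsilon$ fraction of equations is satisfiable from those in which at most $\frac12+\epsilon$ is. Two steps then turn these into graph-cut instances.

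\begin{enumerate}
\item Reduce to Max-E2-LIN-2 with equations of the form $x\oplus y=1$, which is exactly MaxCut. I would use the standard gadget of H\aa stad and Trevisan--Sorkin--Sudan--Williamson, which replaces each 3-variable equation by a constant number of 2-variable constraints with auxiliary variables. This keeps a constant multiplicative gap in the optimum.
\item Reduce the number of occurrences per variable to a constant. I would replace each variable occurring $d$ times by $d$ copies, each copy used once. The copies are tied together by a bounded-degree amplifier or expander, whose equality constraints are realised as cuts via pairs of edges through an intermediate vertex. The expansion property guarantees that an optimal solution may be assumed consistent on each cloud of copies, since any inconsistency costs more cut edges than it gains. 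This yields MaxCut on graphs of bounded degree, with the gap shrinking by a constant factor.
\end{enumerate}

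The final step is to make the graph cubic. Each vertex of degree $d\le\Delta$ is replaced by a small cubic gadget, such as a cycle or ladder with chords, whose ports carry the original edges. The gadget must be chosen so that in any maximum cut its ports are forced, up to a bounded loss, into a consistent bipartition. Vertices of degree less than three are padded with fixed cubic gadgets that contribute a known number of cut edges. Because the degree is bounded, every gadget has constant size. The optimum of the new instance is therefore an affine function of the old optimum with constants bounded in terms of $\Delta$. This is what lets a constant-ratio gap survive, now of the form $0.997C\le c\le C$.

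I expect the main obstacle to be this bookkeeping. In both the occurrence reduction and the cubic gadgets one must show that cheating, meaning an inconsistent assignment inside a gadget, never pays. One must also compute how the additive cut contributions of the gadgets dilute the gap. Getting an explicit constant such as $0.997$ requires Berman and Karpinski's careful amplifier constructions. For the present note any fixed constant below~$1$ would suffice, and the lemma is used only as a black box.
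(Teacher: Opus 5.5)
Citing Berman and Karpinski is exactly what the paper does, but the citation is not as direct as you say. The paper adds one observation that your proposal is missing. Theorem~1(1) of \cite{BermanKarpinski} is stated for graphs of \emph{maximum} degree~3, whereas the lemma asserts hardness on cubic graphs. Cubicity is genuinely used later. Each gadget $H_v$ in the proof of Theorem~\ref{thm:NPhard} needs exactly three terminals, one per incident edge; this is what produces the per-vertex factors $\mu^{18}$ and $2^8$. Also, Lemma~\ref{lem:Cgen}, which is used to get $C\ge n$, fails for maximum degree~3 in general: a perfect matching on $n$ vertices has maximum cut $n/2$. The paper closes the gap by observing that the reduction in \cite[\S3]{BermanKarpinski} already outputs cubic graphs, so the ratio $0.997$ holds on cubic instances as constructed.

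Your reconstruction does not supply this bridge at the stated constant. As you note yourself, the additive contributions of the gadgets dilute the gap. If you pad a maximum-degree-3 graph with gadgets contributing a known amount $K$, then $C'=C+K$, and an estimate $0.997C'\le c'\le C'$ only gives $c'-K\ge 0.997C-0.003K$. So your sentence claiming the gap survives `now of the form $0.997C\le c\le C$' is not justified. Discard isolated vertices, so that $K=O(n)$ and $C=\Omega(n)$. Then padding does preserve \emph{some} constant ratio below~$1$. As you say, that weaker result would suffice for Theorem~\ref{thm:NPhard}, at the cost of a different bound on~$\mu$. To prove the lemma as stated, with $0.997$ on cubic graphs, you need the observation that Berman and Karpinski's construction is already $3$-regular.
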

Berman and Karpinski state the result for graphs of maximum degree 3, but their reduction actually constructs a cubic graph (refer to \cite[\S3]{BermanKarpinski}).

Before we come to the proof of the main result, a small observation.
\begin{lemma}\label{lem:Cgen}
Every cubic graph $G$ has a cut of size at least $n=|V(G)|$.
\end{lemma}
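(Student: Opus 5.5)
A cubic graph has $m=3n/2$ edges, so a cut of size at least $n$ means cutting at least a $2/3$ fraction of the edges. I would get this from local search: take $\sigma$ to be a locally maximal cut, i.e.\ an assignment $\sigma:V(G)\to\{+,-\}$ such that flipping the spin of any single vertex does not increase $|\cut\sigma|$. Such a $\sigma$ exists, for example any $\sigma$ achieving the global maximum of $|\cut\sigma|$, or the endpoint of repeated improving flips, which terminates because $|\cut\sigma|\le m$ strictly increases.

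The key step is a local condition at each vertex. If some vertex $v$ had at most one of its three incident edges in the cut, then flipping $\sigma(v)$ would swap cut and uncut edges at $v$ and leave all other edges unchanged. That would raise $|\cut\sigma|$ by at least $3-2\cdot1=1$, contradicting local maximality. Hence every vertex is incident to at least $2$ cut edges.

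To finish, sum over vertices. Each edge is counted twice, so
$$2|\cut\sigma|=\sum_{v}|\{e\in\cut\sigma: v\in e\}|\ge 2n,$$
which gives $|\cut\sigma|\ge n$.

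None of these steps is a serious obstacle. The only point needing care is that the flip argument uses only that each vertex has odd degree $3$, so that "at most one cut edge" is exactly the case in which a flip strictly improves the cut.
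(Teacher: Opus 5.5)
Your proof is correct and follows the paper's argument. The paper also takes a maximum cut, notes that a vertex with at least two neighbours on its own side could be flipped to enlarge the cut, and concludes that every vertex meets at least two cut edges, whence $|\cut\sigma|\geq n$. Your version works with any locally maximal cut and spells out the double-counting step $2|\cut\sigma|\geq 2n$, which the paper leaves implicit.
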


\begin{proof}
Let $\sigma:V(G)\to\{+,-\}$ be a maximum cardinality cut in $G$.  If some vertex $v$ with $\sigma(v)={+}$ is adjacent to at least two vertices $u_1$, $u_2$ with $\sigma(u_1)=\sigma(u_2)={+}$ then we could obtain a larger cut by flipping the sign of~$v$.  It is similar when $\sigma(v)={-}$.  So every vertex in $G$ is adjacent to at least two vertices on the opposite side of the cut, and hence $|\cut\sigma|\geq n$.
\end{proof}

\begin{theorem}\label{thm:NPhard}
There is a universal constant $\mu>1$ such that, restricted to quasi-line graphs $G$ of degree at most~8:  
\begin{itemize}
\item the function $G\mapsto Z(G,\mu)$ is $\NP$-hard to approximate within factor 2, and
\item it is not possible to sample approximately from the Gibbs distribution of the Ising instance $(G,\mu)$ in polynomial time, unless $\mathrm{RP}=\mathrm{NP}$. 
\end{itemize}
``Approximately'' can be taken to mean realising an output distribution that is within variation distance $\frac14$ of the true one.
\end{theorem}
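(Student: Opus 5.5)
We reduce from \cubicmaxcut{} via Lemma~\ref{lem:BK}. Given a cubic graph $H$ with $n$ vertices and $m=3n/2$ edges, we build a multigraph $H'$ by replacing each edge of $H$ with a bundle of parallel edges. We then set $G=L(H')$. We want maximum cuts of $G$ to encode maximum cuts of $H$, and we want the total weight $\mu^{|\cut\sigma|}$ to be dominated by near-maximum cuts, so that a factor-2 approximation of $Z(G,\mu)$ yields a $0.997$-approximation of $\cubicmaxcut(H)$.

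\textbf{The gadget.} Replace each vertex $v$ of $H$ by a clique $K_v$ in $G$, namely the edges of $H'$ incident to $v$. With three bundles of size $k$ this clique has size $3k$, so every vertex of $G$ has degree at most $2(3k)-1$ minus overlaps. Degree at most $8$ should come from a small choice, e.g.\ doubling each edge ($k=2$). Then each vertex of $G$ (an edge $e$ of $H'$) lies in two cliques of size $6$ sharing the parallel copy, giving degree $5+5-1=9$. To reach $8$ I would instead let $G$ be the line graph of the multigraph obtained by doubling a perfect matching or similar. Rather than fixing this now, I would tune the construction so that degree is at most $8$ and the structure is still a line graph of a multigraph, hence quasi-line.

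\textbf{The calculation.} In a clique of size $s$, a spin assignment with $a$ pluses cuts $a(s-a)$ edges, which is maximised when the clique is balanced. The key step is to show that the maximum cut of $G$ equals $A\,n + B\cdot\cubicmaxcut(H)$ for explicit constants $A,B>0$. Given a cut of $H$, one builds a cut of $G$ achieving this value. Conversely, any cut of $G$ can be locally modified, without loss, into a canonical form that induces a cut of $H$ with at least the corresponding value. Balanced cliques force "imbalance" choices at the shared vertices, and these choices carry the $H$-cut information.

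\textbf{Counting and amplification.} Let $C_G=\max|\cut\sigma|$. Then
\[
\mu^{C_G}\le Z(G,\mu)\le 2^{|V(G)|}\mu^{C_G}.
\]
With $\mu$ a large constant, $\log_\mu Z$ determines $C_G$ within additive $|V(G)|\log_\mu 2+1=O(n/\log\mu)$. By Lemma~\ref{lem:Cgen}, $\cubicmaxcut(H)\ge n$, so the $0.003$ relative slack in Lemma~\ref{lem:BK} amounts to $\Omega(n)$ absolute slack. Choosing $\mu$ large enough that $|V(G)|\log_\mu 2$ is below this slack (scaled by $B$) gives NP-hardness of factor-2 approximation.

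For sampling, draw $\sigma$ approximately from the Gibbs distribution. The total weight of configurations with $|\cut\sigma|< C_G-\epsilon n$ is at most $2^{|V(G)|}\mu^{C_G-\epsilon n}$, which is at most $\frac14$ of $Z$ for $\mu$ large. So with probability at least $\frac12$ the sample is a near-maximum cut, and by the canonical-form map it yields a $0.997$-approximate cut of $H$. This gives an RP algorithm for an NP-hard problem.

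\textbf{Main obstacle.} The hard step is the gadget analysis, which has two parts. One must prove that optimal cuts of $G$ are canonical, with no loss when rounding to an $H$-cut. One must also meet the degree-$8$ constraint exactly. I would first try a gadget in which each $H$-edge becomes a pair of parallel edges, enumerate optimal spin patterns on a single vertex clique by hand, and adjust the multiplicities if the degree bound fails.
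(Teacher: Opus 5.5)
Your counting and sampling framework is sound and is essentially what the paper does: the sandwich $\mu^{C_G}\le Z\le 2^{|V(G)|}\mu^{C_G}$, then Lemma~\ref{lem:Cgen} to turn the $0.003$ relative gap into an $\Omega(n)$ additive one, then $\mu$ large. But the gadget it rests on fails, and the gadget is where all the work lies. In $G=L(H')$, with each edge of $H$ replaced by $k$ parallel copies, the spins live on copies of \emph{edges} of $H$. The only forces are these: each clique $K_v$ wants its $3k$ edge-copies split evenly, and each bundle wants to be unsplit, since its internal edges lie in two cliques but occur once in $G$. Nothing records on which side of a cut of $H$ the vertex $v$ lies, so the identity $\max_\sigma|\cut\sigma|=An+B\cdot\cubicmaxcut(H)$ is false. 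For $k=2$, let $q_e\in\{-1,0,1\}$ be the number of $+$ copies of $e$ minus one. Then
\[
|\cut\sigma|=\tfrac{15}{2}n-\sum_{v}\Bigl(\sum_{e\ni v}q_e\Bigr)^2+\sum_e q_e^2 .
\]
A vertex-by-vertex check shows the maximum possible value $\tfrac{17}{2}n$ is attained exactly when $H$ has a $2$-factor with all cycles even. (Alternate $q_e=\pm1$ around the cycles and put $q_e=0$ on the complementary perfect matching.) This is the case exactly when $H$ is $3$-edge-colourable. So your ``imbalance choices at shared vertices'' carry edge-colouring information, not cut information. Concretely, $K_{3,3}$ and the triangular prism both yield maximum cut $51$, although their own maximum cuts are $9$ and $7$; this forces $B=0$, and the same happens for every $k$. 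Moreover, a vertex of $G$ has degree $5k-1$, which is $9$ already for $k=2$. Your suggested repair of doubling only a matching still leaves the spins on edges of $H$, so it cannot help.

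The missing idea is a vertex gadget in which a whole block of vertices is forced to be monochromatic and so carries the spin of $v$. The paper takes two $6$-cliques on $T_v\cup A_v$ and $A_v\cup B_v$ sharing the triple $A_v$, and joins $t_u^e$ to $t_v^e$ by a single edge for each $e=\{u,v\}$. The edges inside $A_v$ belong to both cliques but are counted once, so the cut on $H_v$ is the sum of the two clique cuts minus the cut inside $A_v$, and splitting $A_v$ is penalised. The internal edges contribute $\mu^{18}$ when $T_v$ is monochromatic, $A_v$ has the opposite colour and $B_v$ agrees with $T_v$, and at most $\mu^{17}$ otherwise. Re-perfecting an imperfect $H_v$ according to the majority on $T_v$ changes at most one terminal, hence at most one external edge, while gaining at least one internal edge. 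This gives $\mu^{18n+c}\le Z_\sigma\le 2^{8n}\mu^{18n+c}$, that is, $\max_{\sigma^*}|\cut\sigma^*|=18n+\cubicmaxcut(H)$. It also gives exactly the rounding inequality your sampling step needs, valid for \emph{all} configurations and not just optimal ones. This $G^*$ is the line graph of the multigraph with:
\begin{itemize}
\item two vertices $x_v,y_v$ for each $v$, joined by three parallel edges (the set $A_v$);
\item three pendant edges at $y_v$ (the set $B_v$);
\item each edge of $H$ subdivided (the terminals).
\end{itemize}
Its maximum degree $8$ is attained on $A_v$. So parallel edges are used to create a forced monochromatic block, not to thicken the edges of $H$.
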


The constant 2 as an approximation factor is arbitrary, and could be replaced by any polynomial, or even modestly exponential function of~$n$.

\begin{proof}
Suppose $G$ is an instance of $\cubicmaxcut$. Let $n=|V(G)|$ and $m=|E(G)|$, so that $m=\frac32n$.

\tikzstyle{vertex} = [draw, shape=circle,minimum size=1mm,  inner sep=1pt, fill]
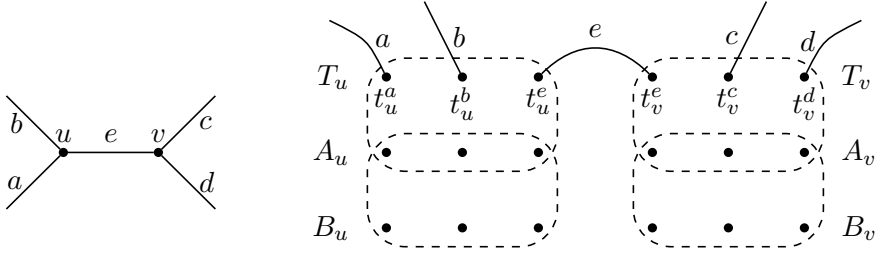
\begin{figure}
\begin {tikzpicture}[scale=0.5, semithick]

\draw [rounded corners = 5mm, dashed] (-0.5,-0.5) rectangle (4.5,2.5);
\draw [rounded corners = 5mm, dashed] (-0.5,1.5) rectangle (4.5,4.5);

\draw (0,0) node [vertex]  {} node [left=10pt] {$B_u$};
\draw (2,0) node [vertex]  {};
\draw (4,0) node [vertex]  {};
\draw (0,2) node [vertex]  {} node [left=10pt] {$A_u$};
\draw (2,2) node [vertex]  {};
\draw (4,2) node [vertex]  {};
\draw (0,4) node [vertex] (t1u) {} node [below] {$t_u^a$} node [left=10pt] {$T_u$};
\draw (2,4) node [vertex] (t2u) {} node [below] {$t_u^b$};
\draw (4,4) node [vertex] (t3u) {} node [below] {$t_u^e$};

\draw [rounded corners = 5mm, dashed] (6.5,-0.5) rectangle (11.5,2.5);
\draw [rounded corners = 5mm, dashed] (6.5,1.5) rectangle (11.5,4.5);

\draw (7,0) node [vertex]  {};
\draw (9,0) node [vertex]  {};
\draw (11,0) node [vertex]  {} node [right=10pt] {$B_v$};
\draw (7,2) node [vertex]  {};
\draw (9,2) node [vertex]  {};
\draw (11,2) node [vertex]  {}  node [right=10pt] {$A_v$};
\draw (7,4) node [vertex] (t1v) {} node [below] {$t_v^e$};
\draw (9,4) node [vertex] (t2v) {} node [below] {$t_v^c$};
\draw (11,4) node [vertex] (t3v) {} node [below] {$t_v^d$}  node [right=10pt] {$T_v$};

\draw (t3u) .. controls (5,5) and (6,5) .. node [above] {$e$} (t1v);
\draw (t1u) .. controls +(-0.5,1) .. node [right] {$a$} +(-1.5,1.5);
\draw (t2u) -- node [right] {$b$} +(-1,2);
\draw (t2v) -- node [left] {$c$} +(1,2);
\draw (t3v) .. controls +(0.5,1) .. node [left] {$d$} +(1.5,1.5);

\draw (-8.5,2) node [vertex] (u) {} node [above] {$u$};
\draw (-6,2) node [vertex] (v) {} node [above] {$v$};
\draw (v) to node [right] {$c$} +(1.5,1.5);
\draw (v) to node [right] {$d$} +(1.5,-1.5);
\draw (u) to node [left] {$b$} +(-1.5,1.5);
\draw (u) to node [left] {$a$} +(-1.5,-1.5);
\draw (u) to node [above] {$e$} (v);

\end{tikzpicture}
\caption{An edge $e=\{u,v\}$ in $G$ and its translation in the graph $G^*$.  The dashed `cartouches' indicate cliques.}
\label{fig:gadget}
\end{figure}

Construct an instance $(G^*,\mu)$ of the antiferromagnetic Ising model as follows.  First, for all $v\in V(G)$ define the graph $H_v$ on vertex set 
$$
V(H_v)=T_v\cup A_v\cup B_v
$$
where $T_v$, $A_v$ and $B_v$ are disjoint sets of cardinality~3.  Here, $T_v$ is a set of `terminals' which has extra structure:
$$
T_v=\{t_v^e:e\ni v\}.
$$
Thus, there is a terminal for every pair $v,e$, where $e$ is an edge incident at vertex~$v$.  The edge set $E(H_v)$ is obtained as a union of a clique on $T_v\cup A_v$ and a clique on $A_v\cup B_v$.  There are no further edges in $E(H_v)$.\footnote{There is a similarity between $H_v$ and one of the counterexamples given by Bencs, Csikv\'ari and Regts~\cite[\S5]{BencsCsikvariRegts}. In retrospect this is unsurprising, as to escape tractability we need to employ a (small) graph whose partition function has zeros that do not all lie on the real axis.} Now define $G^*$ by taking the disjoint union of the graphs $H_v$ over $v\in V(G)$ and adding some extra edges.  Specifically,
\begin{align*}
V(G^*)&=\bigcup_{v\in V(G)}V(H_v)\\
\noalign{\noindent and}
E(G^*)&=\big\{\{t_u^e,t_v^e\}:e=\{u,v\}\in E(G)\big\} \cup \bigcup_{v\in V(G)}E(H_v).
\end{align*}
The edges of $G^*$ that lie within some $H_v$ are internal, and the other edges external.  See Figure~\ref{fig:gadget}. 
We remark that $G^*$ is a quasi-line graph and has maximum degree 8.

Suppose the maximum cut in $G$ has cardinality~$C$.  We show that, for sufficiently large universal constant $\mu>1$, independent of~$n$, a sample from the Gibbs distribution on $(G^*,\mu)$ will, with high probability, point out a cut of size at least $0.997C$ in $G$.

Let $\sigma^*:V(G^*)\to\{+,-\}$ be an Ising configuration on $G^*$. The induced configuration $\phi(\sigma^*):V(G)\to\{+,-\}$ on~$G$ is defined by $\phi(\sigma^*)=\sigma$ where 
$$
\sigma(v)=\maj\{\sigma^*(t_v^e):e\ni v\},\quad \text{for all $v\in V(G)$},
$$
where $\maj$ denotes majority. Define 
$$
Z_\sigma=Z_\sigma(G^*,\mu)=\sum_{\sigma^*\in \phi^{-1}(\sigma)}\wt\sigma^*.
$$
so that 
$$
Z(G^*,\mu)=\sum_{\sigma:V(G)\to\{+,-\}}Z_\sigma.
$$
Suppose  that $\sigma$ specifies a cut of cardinality $c$ in $G$.  We claim that   
\begin{equation}\label{eq:Zsandwich}
\mu^{18n+c} \leq Z_\sigma(G^*,\mu)\leq 2^{8n}\mu^{18n+c}.
\end{equation}

Before verifying \eqref{eq:Zsandwich} we need to check the key property of the `gadget' graphs $\{H_v\}$.  Call a configuration $\sigma^*$ \emph{perfect on $H_v$} if 
$$
\sigma^*(T_v)=\{\pm\}, \>
\sigma^*(A_v)=\{\mp\}, \text{\ and\ }\> \sigma^*(B_v)=\{\pm\}.
$$
Observe that the (internal) edges of $H^v$ contribute $\mu^{18}$ to $\wt\sigma^*$ if $\sigma^*$ is perfect on~$H^v$, and at most $\mu^{17}$ otherwise.  To see this, consider the two cases (up to symmetry) $\sigma^*(A_v)=\{+\}$ and $\sigma^*(A_v)=\{+,-\}$.  Then count the extensions to $T_v$ and~$B_v$.

Call  $\sigma^*$ \emph{perfect} if it is perfect on $H_v$ for all $v\in V(G)$.  There is a unique perfect configuration $\sigma_0^*\in \phi^{-1}(\sigma)$.  This configuration has weight $\wt\sigma_0^*=\mu^{18n+c}$.  This deals with the lower bound in \eqref{eq:Zsandwich}

For the upper bound, let $\sigma^*_0$ be as before, and $\sigma^*\in\phi^{-1}(\sigma)$ be arbitrary.  Suppose we transform $\sigma^*$ to $\sigma^*_0$ by reassigning spins on each imperfect $H_v$, for each $v\in V(G)$ in turn.   The contribution to $\wt\sigma^*$ from internal edges in a previously imperfect $H_v$ is increased by a factor $\mu$ or larger.  However, it is possible that one external edge changes from $+-$ or $-+$ to $--$ or $++$, decreasing the weight by a factor~$\mu$.  In all cases, the weight does not decrease, so $\wt \sigma^*\leq\wt \sigma^*_0$.   The upper bound in \eqref{eq:Zsandwich} comes from noticing that $|\phi^{-1}(\sigma)|=2^{8n}$.

Now let $C$ be the cardinality of a maximum cut in $G$ and $0<c<C$.  Noting \eqref{eq:Zsandwich}, we have
\begin{equation}\label{eq:lbZ}
\sum_{\substack{\sigma:V(G)\to\{+,-\}\\ |\cut\sigma|>c}} Z_\sigma \geq 
\sum_{\substack{\sigma:V(G)\to\{+,-\}  \\ |\cut\sigma|=C}} Z_\sigma \geq 2\mu^{18n+C}.
\end{equation}
and 
\begin{equation}\label{eq:ubZ}
\sum_{\substack{\sigma:V(G)\to\{+,-\} \\ |\cut\sigma|\leq c}} Z_\sigma \leq 2^n\times  2^{8n}\mu^{18n+c}=2^{9n}\mu^{18n+c}.
\end{equation}

Set $c=0.997 C$, so that $[c,C]$ matches the gap specified in Lemma~\ref{lem:BK} and recall that $C\geq n$, by Lemma~\ref{lem:Cgen}.   Then 
\begin{equation}\label{eq:bvsb'}
\mu^{18n+C}\geq 2^{9n+2}\mu^{18n+c}
\end{equation} 
provided $n\geq10$ and we set $\mu\geq 2^{3067}$.  Thus, combining \eqref{eq:lbZ}, \eqref{eq:ubZ} and \eqref{eq:bvsb'}, a random configuration~$\sigma^*$ from the Gibbs distribution on~$G^*$ will satisfy $|\cut \phi(\sigma^*)|>c=0.997C$ with probability at least~$\frac34$.  Suppose we had a polynomial-time approximate sampler for the Gibbs distribution on bounded-degree quasi-line graphs that gets within variation distance~$\frac14$.  Then, via the mapping $\phi$ and with probability at least~$\frac12$, we could find a cut in~$G$ of cardinality at least~$c$.  By Lemma~\ref{lem:BK}, this would imply $\RP=\NP$.  

Now suppose we have an algorithm that produces an approximation $\Ztilde(G,\mu)$ to $Z(G,\mu)$ satisfying 
$$
\tfrac12 Z(G,\mu)\leq \Ztilde(G,\mu)\leq 2Z(G,\mu).
$$
Then, if $G$ has a maximum cut of cardinality $C$ we have 
\begin{align*}
\Ztilde(G,\mu)&\geq \tfrac12 Z(G,\mu) \geq \mu^{18n+C},&\text{(by \eqref{eq:lbZ})}\\
\noalign{\noindent whereas, if $G$ has no cuts of cardinality greater than $c$, then}
\Ztilde(G,\mu)&\leq 2 Z(G,\mu) \leq 2^{9n+1}\mu^{18n+c}\leq \tfrac12\mu^{18n+C}.&\text{(by \eqref{eq:ubZ} and \eqref{eq:bvsb'})}
\end{align*}
We can clearly distinguish these two situations.  In light of Lemma~\ref{lem:BK} we have a polynomial-time reduction from approximating $\cubicmaxcut(G)$ to approximating the partition function of the Ising model on a bounded-degree quasi-line graph~$G^*$. 
\end{proof}

Two remarks.    The first, obvious one is that the above result places a (super) astronomic bound on $\mu$, and clearly says nothing about computation in this universe.  Although there is scope for tightening the estimates, it is unlikely, given the starting point in Lemma~\ref{lem:BK}, that the bound can be reduced to merely astronomic.  However, it does establish that the positive results in \cite{ChenGu, JerrumHfree, JerrumPatel} relating to the hard-core (independent set) model do not have analogues for the antiferromagnetic Ising model. The second is that the graph $G^*$ constructed in the proof is special kind of quasi-line graph in which the cliques may be chosen in advance.  Another way of looking at this is that $G^*$ is a line graph of a multigraph.  

Readers familiar with the Asano contraction approach as described by Lebowitz, Ruelle and Speer~\cite{LebowitzRuelleSpeer}, might wonder why it does not yield a zero-free region of the partition function containing the real axis.  If it did, it would contradict Theorem~\ref{thm:NPhard} (which relies on the existence of multiple phases), or at least establish $\mathrm{P}=\mathrm{NP}$.  Consider a line graph as being constructed from cliques by identifying vertices. The roots of the partition functions of the individual cliques are known to all lie on the negative real axis.  The line graph itself then has a parabolic multivariate\footnote{We noted earlier that a larger univariate region has been obtained via a different approach~\cite{BencsCsikvariRegts}.} zero-free region containing the positive real axis~\cite{LebowitzRuelleSpeer,ChenLiuVigodaHolant}.  This works for line graphs proper, but not for quasi-line graphs.  The catch is hinted at in \cite[Ex.~2.2(c)]{LebowitzRuelleSpeer}:  while identifying vertices, parallel edges are created, and the resulting Ising system has non-constant interaction strengths.  Incidentally, this issue does not arise in the context of the hard-core model, where parallel edges are equivalent to single ones.  This to a modest extent explains the difference in properties between the two models.

\section{Exponential-time mixing}
Theorem~\ref{thm:NPhard} rules out any polynomial-time algorithm for approximating the partition function or approximately sampling configurations, under the assumption $\mathrm{NP}\not=\mathrm{P}$.  We also have an \emph{unconditional} exponential bound on mixing  time for Glauber (single-site) dynamics.  

An $n$-vertex graph~$G$ is a \emph{$c$-magnifier} if, for every vertex subset $U\subset V(G)$ with $|U|\leq\frac12|V(G)|$, it is the case that $|N_G(U)\setminus U|\geq c|U|$, where $N_G(U)$ denotes the neighbourhood of~$U$.

\begin{theorem}\label{thm:torpid}
There is a universal constant $\mu>1$, and a family $(G_n:n\in\Nset)$ of quasi-line graphs of degree at most~8, with $|V(G_n)|=18n$, such that the mixing time of Glauber dynamics on the Ising model $(G_n,\mu)$ is $\Omega(2^n)$.  
\end{theorem}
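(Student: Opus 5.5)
We take $G_n = G^*$ for a suitable cubic graph $G$ on $n$ vertices and run a conductance (bottleneck) argument. Since $|V(H_v)|=9$, the vertex count of $G^*$ is $9n$. To match the stated $18n$, we let $G$ be a cubic graph on $2n$ vertices. The most natural choice is a bipartite cubic $c$-magnifier (expander), with bipartition $(L,R)$, $|L|=|R|=n$. The ground states are the two perfect configurations lifting $\sigma_L$ (plus on $L$, minus on $R$) and its negation $-\sigma_L$. By~\eqref{eq:Zsandwich}, each has weight $\mu^{36n+3n}=\mu^{36n+m}$, where $m=3n$ is the number of edges of $G$.

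The bottleneck set is $S=\{\sigma^*:\sigma=\phi(\sigma^*)\text{ agrees with }\sigma_L\text{ on more than half of }V(G)\}$. By the $\pm$ symmetry of the model, $\pi(S)=\pi(\overline S)\le \frac12$ up to the ties at exactly half agreement, which we can break symmetrically. A single Glauber move changes one spin of $G^*$, so it changes at most one value of $\phi(\sigma^*)$. Hence every transition from $S$ to $\overline S$ starts or ends at a configuration $\sigma^*$ with $\phi(\sigma^*)$ agreeing with $\sigma_L$ on exactly $n$ or $n\pm1$ vertices. Call this boundary set $\partial S$. The standard conductance bound then gives a mixing time of at least order $\pi(S)/\pi(\partial S)$. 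It therefore suffices to show $\pi(\partial S)\le 2^{-n}\pi(S)$, up to polynomial factors that we absorb by choosing $\mu$ large.

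For $\sigma^*\in\partial S$, let $U\subseteq V(G)$ be the smaller of the sets on which $\phi(\sigma^*)$ agrees or disagrees with $\sigma_L$; then $|U|$ is about $n$, half of $V(G)$. In $G$, every edge between $U$ and its complement that joins $L$ to $R$ has both endpoints with the same sign pattern relative to the bipartition flipped, so it is uncut. The magnifier property gives at least $c'n$ such boundary edges, so $|\cut\phi(\sigma^*)|\le 3n-c'n$. Summing with~\eqref{eq:Zsandwich} over the at most $2^{2n}$ choices of $\sigma$ gives
\[
Z(\partial S)\le 2^{2n}\,2^{16n}\,\mu^{36n+3n-c'n},
\qquad
Z(S)\ge \mu^{36n+3n}.
\]
Choosing $\mu\ge 2^{(19+1)/c'}$ yields $\pi(\partial S)/\pi(S)\le 2^{-n}$, which gives mixing time $\Omega(2^n)$.

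The main obstacle is the combinatorial step: showing that a near-balanced assignment on a bipartite cubic magnifier leaves a linear number of uncut edges, with an explicit constant $c'$. I would argue it via edge expansion. Vertex magnification with constant $c$ on a cubic graph gives edge expansion: at least $c|U|$ edges leave any set $U$ with $|U|\le\frac12|V|$. Apply this to the set $D$ of vertices where $\phi(\sigma^*)$ disagrees with $\sigma_L$, or to its complement, whichever is smaller. For a balanced split, the smaller one still has size about $n$, so at least $cn$ edges leave it. Every edge from $D$ to $V\setminus D$ is uncut, because across the bipartition the two endpoints then carry equal spins. The existence of cubic bipartite magnifiers for all $n$ is standard (random cubic bipartite graphs, or explicit constructions), and it is the one external input. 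I would also double-check that the tie-breaking at exactly $n$ agreements keeps $\pi(S)$ bounded away from $0$, which follows from the symmetry.
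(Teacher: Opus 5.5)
Your proposal is correct and is essentially the paper's proof: a bipartite cubic magnifier on $n+n$ vertices lifted to $G^*$, a bottleneck at the balanced configurations (your condition ``agreement with $\sigma_L$ on more than $n$ vertices'' is exactly the paper's $\phi(\sigma^*)\in\Omega_+$, and agreement exactly $n$ is $\Omega_=$), the observation that edges between the agreement and disagreement sets are uncut, and \eqref{eq:Zsandwich} combined with a conductance bound. The differences are cosmetic: you use the generic conductance bound with a slightly larger boundary set instead of citing Claim~7.14 of~\cite{ETH}; also, your magnification count $c|U|=cn$ is the correct one, whereas the paper's $\frac18\times 2n$ overstates it by a factor of~2, which only affects how large $\mu$ must be.
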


\begin{proof}
For each $n\in\Nset$ let $G_n$ be a cubic bipartite $n+n$ vertex graph that is an $\frac18$-magnifier.  The existence of such a graph follows from a lemma of Bassalygo \cite{Bassalygo}; this lemma is quoted by Alon, who explicitly gives the random bipartite cubic graph as a numerical example. See~\cite[Lemma 4.1]{Alon} and the discussion that follows.   Let $G_n^*$ be the graph obtained from $G_n$ by the construction used in the proof of Theorem~\ref{thm:NPhard}.  Note that $G_n^*$ is a quasi-line graph of maximum degree~8 with $2n\times9=18n$ vertices.  

Let $V(G_n)=L(G_n)\cup R(G_n)$ be the bipartition of $G_n$, and $\Omega_=$ (respectively $\Omega_+$, $\Omega_-$) be the set of cuts in $G_n$ in which the number of $+$ vertices in $L(G_n)$ is equal to (resepctively greater than, less than) the number  in $R(G_n)$.  Recall the function~$\phi$ from configurations in $G_n^*$ to configurations in $G_n$, used in the proof of Theorem~\ref{thm:NPhard}.  In transitioning from a configuration in $\phi^{-1}(\Omega_+)$ to one in $\phi^{-1}(\Omega_-$), Glauber dynamics on $G_n^*$ must pass through some configuration in $\phi^{-1}(\Omega_=)$.     

Suppose $\sigma\in\Omega_=$, and let 
$$S=(L(G_n)\cap \sigma^{-1}\{-\})\cup (R(G_n)\cap \sigma^{-1}\{+\})$$
We have $|S|=|S^c|=n$ where $S^c=V(G_n)\setminus S$.  Since $G_n$ is a $\frac18$ magnifier, there are at least $\frac18\times2n=\frac14n$ vertices in $S^c$ adjacent to some vertex in $S$, from which it follows that  $|\cut \sigma|\leq m-\frac14n$, where $m=3n$ is the number of edges in~$G_n$.  As in the previous proof, this entails $Z_\sigma\leq (2^8\mu^{18})^{2n}\mu^{m-n/4}$ and 
$$
\sum_{\sigma\in\Omega_=}Z_\sigma\leq 2^{2n}(2^8\mu^{18})^{2n}\mu^{m-n/4}=2^{18n}\mu^{36n}\mu^{m-n/4}.
$$
On the other hand ,
$$
\sum_{\sigma\in\Omega_\pm} Z_\sigma\geq (\mu^{18})^{2n}\mu^m=\mu^{36n}\mu^m.
$$
and hence
$$
\sum_{\sigma\in\Omega_=}Z_\sigma\leq 2^{-n}\sum_{\sigma\in\Omega_\pm}Z_\sigma,
$$
provided $\mu\geq2^{76}$.

This inequality implies that the `conductance' of Glauber dynamics on $G^*_n$ is exponentially small, and hence that the mixing time is exponentially large.  Specifically, assuming without loss of generality, that $\sum_{\sigma\in\Omega_+}Z_\sigma\leq\frac12$, we may set $A=\phi^{-1}(\Omega_+)$ and $M=\phi^{-1}(\Omega_=)$ in~\cite[Claim 7.14]{ETH}, to obtain a lower bound of $2^{n-2}$ on mixing time. 
\end{proof}

A couple of remarks.  Basically the same argument would work for any `cautious' Markov chain that changes only a small number of spins at each transition. Also, by Cheeger's inequality, the spectral gap of the Glauber dynamics on the Ising system $(G_n,\mu)$ is $O(2^{-n})$; equivalently, the relaxation time is $\Omega(2^n)$.

\section{Interactions of varying strength}
A feature of the existing positive results on the antiferromagnetic Ising model on line graphs \cite{DyerHeinrichJerrumMuller,ChenLiuVigodaHolant,BencsCsikvariRegts} is that they rely on constant strength interactions over all edges.  Suppose, instead of having a single interaction strength $\mu$, as previously, we allow two: $\mu$ (large) and 1.  Then the edges with weight 1 in a graph~$G$ have no influence on the weight $\wt\sigma$ of a configuration~$\sigma$ and could as well be removed from the instance~$G$ when considering the partition function.  Thus, computing $Z(G,\mu)$ for any $n$-vertex graph $G$ is the same as computing the partition function of the complete graph $K_n$ with a certain pattern of interaction strengths chosen from $\{\mu,1\}$. But $K_n$ is a line graph of the star $K_{n,1}$. So if we allow varying interactions, line graphs as as hard as arbitrary graphs. 

So it seems that the case of line graphs with varying interaction strengths is without interest, even if we allow just two strengths (admittedly, widely separated).  However, for us it is not \emph{quite} vacuous as we are exclusively concerned with classes of graphs of bounded degree, and it is apparently not the case that every family of bounded-degree graphs can be obtained as subgraphs of line graphs of bounded degree.

However, it is fairly easy to show that allowing two interaction strengths leads to intractability, even when vertex degrees are bounded.  Given a cubic graph~$G$, introduce graph gadgets $J_v$, with $v\in V(G)$, that are even simpler than before.  The vertex set of~$J_v$ is $V(J_v)=T_v\cup A_v$, where $T_v$ and~$A_v$ are sets of cardinality~3, and $T_v=\{t_v^e:e\ni v\}$.  The edge set of~$J_v$ is that of a compete bipartite graph with bipartition $T_v+A_v$. Form $G^\dag$ from~$G$ as we did before for~$G^*$, but using $J_v$ in place of~$H_v$.   Note that $G^\dag$ is a subgraph of a line graph of maximum degree~6.  

As before, we say that a configuration $\sigma^\dag$ on $G^\dag$ is perfect on $J_v$ if $\sigma^\dag(T_v)=\{\pm\}$ and $\sigma^\dag(A_v)=\{\mp\}$.  The internal edges in $J_v$ contribute a factor  $\mu^9$ to $\wt\sigma^\dag$ if $\sigma^\dag$ is perfect on $J_v$, and at most $\mu^6$ otherwise.  So the proofs of  Theorems \ref{thm:NPhard} and~\ref{thm:torpid} carry across from quasi-line graphs to subgraphs of line graphs (with an improved bound on $\mu$). 

\begin{corollary}
Theorems \ref{thm:NPhard} and~\ref{thm:torpid} hold with ``quasi-line graphs of  degree at most~8'' replaced by ``subgraphs of line graphs of maximum degree at most~6''.
\end{corollary}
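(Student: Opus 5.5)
The plan is to rerun the proofs of Theorems~\ref{thm:NPhard} and~\ref{thm:torpid} with $G^*$ replaced by $G^\dag$ throughout, and check that each ingredient still holds.

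\emph{Structural claims.} In $J_v$, each terminal $t_v^e$ has three internal neighbours (all of $A_v$) and one external neighbour $t_u^e$, so it has degree~4. Each vertex of $A_v$ has degree~3. Hence $G^\dag$ has maximum degree at most~6.

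To show that $G^\dag$ is a subgraph of a line graph, take the line graph $L(K)$ of the following base graph $K$. For each $v$ create a hub vertex $h_v$ in $K$. Represent each $a\in A_v$ by an edge $h_v x_a$, where $x_a$ is a private leaf. Represent each terminal $t_v^e$, with $e=\{u,v\}$, by an edge $h_v y_e$, where $y_e$ is a new vertex shared between the two endpoints of $e$ (so $t_u^e$ is the edge $h_u y_e$).

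In $L(K)$ all six vertices of $J_v$ form a clique, because their edges share $h_v$; this clique contains the complete bipartite graph on $T_v+A_v$. The external edge $\{t_u^e,t_v^e\}$ is present because both edges contain $y_e$. No extra adjacencies between different gadgets appear, and $L(K)$ has degree at most $5+1=6$. Since we only need $G^\dag$ to be a subgraph of $L(K)$, this suffices. (Alternatively one could embed into a line graph of larger degree, but the statement asks for degree at most~6, which this achieves.)

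\emph{Gadget property.} Perfect on $J_v$ means all nine $T_v$--$A_v$ edges are cut, contributing $\mu^9$. Otherwise, suppose $T_v$ has $p$ plus-spins and $A_v$ has $q$. The number of cut edges is $p(3-q)+(3-p)q$. If not perfect this is at most $6$: the value $9$ requires $\{p,q\}=\{0,3\}$, and the next largest values occur at $(p,q)=(3,1)$ and its symmetric variants, giving $6$.

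\emph{Sandwich bound.} With this gap, \eqref{eq:Zsandwich} becomes
\[
\mu^{9n+c}\le Z_\sigma\le 2^{5n}\mu^{9n+c}.
\]
For the lower bound, the unique perfect configuration in $\phi^{-1}(\sigma)$ has weight $\mu^{9n+c}$. For the upper bound, repairing an imperfect $J_v$ gains at least $\mu^3$ internally and loses at most $\mu^3$ on the three external edges, so the weight never decreases. The factor $2^{5n}$ counts the configurations on the five non-majority-determined bits per gadget; more crudely, $2^{6n}$ also works.

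\emph{Remaining steps.} The remainder of both proofs is verbatim, with $18n$ replaced by $9n$ and $2^{8n}$ by $2^{6n}$. This yields the required gap inequality with a smaller threshold for $\mu$. For Theorem~\ref{thm:torpid} it yields the same conductance bound, now for graphs with $12n$ vertices rather than $18n$; the statement presumably allows this adjustment of the vertex count.

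\emph{Main obstacle.} The only delicate point is the repair step, namely checking that the internal gain of a factor at least $\mu^3$ dominates the at most three external edges lost. Everything else is bookkeeping.
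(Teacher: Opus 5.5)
Your proposal is correct and is essentially the paper's argument: the same $K_{3,3}$ gadgets $J_v$, the same gap of $\mu^9$ for perfect gadgets versus at most $\mu^6$ otherwise, and a verbatim rerun of the proofs of Theorems~\ref{thm:NPhard} and~\ref{thm:torpid}; your hub construction spells out the degree-6 line graph containing $G^\dag$, and the change from $18n$ to $12n$ vertices in Theorem~\ref{thm:torpid} is tacitly allowed. Your repair step is valid even with the crude allowance for three lost external edges. In fact, for $\sigma^\dag\in\phi^{-1}(\sigma)$ at most one terminal of $T_v$ disagrees with the majority, so at most one external edge is lost and there is slack.
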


It is possible to vary the interaction strengths within a restricted range and preserve tractability, but large variations cannot be tolerated.

\bibliographystyle{plain}
\bibliography{IsingClawFree}

\end{document}